\newtheorem{thm}{Theorem}[section]
\newtheorem{prop}[thm]{Proposition}
\theoremstyle{definition}
\newtheorem{defn}[thm]{Definition}
\theoremstyle{remark}
\numberwithin{equation}{section}
\newcommand{\signedpermutation}{
\pgfmatharray{\sp}{0}\let\n\pgfmathresult
\pgfmathparse{360.0/\n}\let\segment\pgfmathresult
\pgfmathparse{\segment/2}\let\shift\pgfmathresult
\def\radius{1.5cm}
\def\labelrad{1.7cm}
\def\regionboundaryin{1.4cm}
\def\regionboundaryout{1.6cm}
\scalebox{1.3}{\begin{tikzpicture}
\foreach \x in {1,2,...,\n}
{
  \draw[thick] (360-\x*\segment+90:\regionboundaryin)
             --(360-\x*\segment+90:\regionboundaryout);
  \pgfmatharray{\sp}{\x}\let\tmp\pgfmathresult
  \pgfmathparse{int(\tmp)}\let\tmp\pgfmathresult
  \node at (360-\x*\segment+90+\shift:\labelrad) {\tmp};
  \pgfmathparse{360-(\x-1)*\segment+90}\let\alpha\pgfmathresult;
  \pgfmathparse{360-(\x-1)*\segment+90-\segment}\let\beta\pgfmathresult;
  \pgfmathgreater{\tmp}{0}\let\decision\pgfmathresult
  \ifnum \decision=1
    \draw[color=black,very thick,>=latex,->] (\alpha:\radius) arc (\alpha:\beta:\radius);
  \else
    \draw[color=black,very thick,>=latex,->] (\beta:\radius) arc (\beta:\alpha:\radius);
  \fi
};
\end{tikzpicture}}}
\newcommand{\unsignedpermutation}{
\pgfmatharray{\sp}{0}\let\n\pgfmathresult
\pgfmathparse{360.0/\n}\let\segment\pgfmathresult
\pgfmathparse{\segment/2}\let\shift\pgfmathresult
\def\radius{1.5cm}
\def\labelrad{1.7cm}
\def\regionboundaryin{1.4cm}
\def\regionboundaryout{1.6cm}
\scalebox{1.3}{\begin{tikzpicture}
\foreach \x in {1,2,...,\n}
{
  \draw[thick] (360-\x*\segment+90:\regionboundaryin)
             --(360-\x*\segment+90:\regionboundaryout);
  \pgfmatharray{\sp}{\x}\let\tmp\pgfmathresult
  \pgfmathparse{int(\tmp)}\let\tmp\pgfmathresult
  \node at (360-\x*\segment+90+\shift:\labelrad) {\tmp};
  \pgfmathparse{360-(\x-1)*\segment+90}\let\alpha\pgfmathresult;
  \pgfmathparse{360-(\x-1)*\segment+90-\segment}\let\beta\pgfmathresult;
  \pgfmathgreater{\tmp}{0}\let\decision\pgfmathresult
  \ifnum \decision=1
    \draw[color=black,very thick] (\alpha:\radius) arc (\alpha:\beta:\radius);
  \else
    \draw[color=gray,very thick] (\beta:\radius) arc (\beta:\alpha:\radius);
  \fi
};
\end{tikzpicture}}}
\newcommand{\Q}{\mathbb{Q}}
\title{A mean first passage time genome rearrangement distance}%
\author{Andrew R. Francis${}^\ast$}%
\address{Centre for Research in Mathematics and Data Science, Western Sydney University, Australia}%
\email{a.francis@westernsydney.edu.au}%
\thanks{${}^\ast$ Corresponding author: \texttt{a.francis@westernsydney.edu.au}}
\author{Henry P. Wynn}
\address{Centre for the Analysis of Time Series, London School of Economics, UK}%
\email{h.wynn@lse.ac.uk}%
\thanks{ARF acknowledges the support of the Australian Research Council via Discovery Project DP180102215.}%
\subjclass{}%
\keywords{}%
\date{\today}%
\begin{document}

\begin{abstract}
This paper introduces a new way to define a genome rearrangement distance, using the concept of mean first passage time from probability theory.  Crucially, this distance estimate provides a genuine metric on genome space.  We develop the theory and introduce a link to a graph-based zeta function.  The approach is very general and can be applied to a wide variety of group-theoretic models of genome evolution.
\end{abstract}
\maketitle

\section{Introduction}

Estimating evolutionary distances between organisms is a key ingredient in most approaches to phylogenetic reconstruction.  Making such estimates broadly involves two steps: specifying an evolutionary model (the way in which the organisms can change), and deciding what metric to use.  The most straightforward approach to the latter is to ask for the least number of steps between the two organisms under the model; this is the \emph{minimal} distance.

By definition, a minimal distance can only \emph{under}estimate the true distance, and there is considerable interest in finding ways to estimate distance that are less problematic~\cite{jukes1969evolution,wang2001estimating,serdoz2017maximum}.  In this paper, we take a new approach to this problem by adapting a construction from probability theory, within a framework that also exploits group theory.  Thus the methods in this paper bring together mathematical tools from disparate fields to address a problem in molecular evolution.  Specifically, we show how one may be able to calculate the \emph{mean first passage time (MFPT)} between two organisms, and we put this forward as an alternative to the minimal distance.  

The ``true distance'' is the actual number of inversions that take place on the path between two genomes that have the same gene content. 
As said, the minimal distance is an underestimate of this, and unlikely to be exhibited by a  random process. The distance in this paper can be considered as an alternative to the maximum likelihood estimate approach~\cite{egrinagy2013group,francis2014algebraic}, and has some potential advantages over it. First, the MLE does not always exist in the sense that the likelihood function may not have a unique  maximum, or indeed any maximum at all. Second, the MLE has a infinite series computation, needing truncation, whereas the formulae in this paper are closed form. Third, looking at the {\em first} passage time of a path from genome $A$ to genome $B$, avoids the issue of multiple visits of paths that are considered in the MLE approach.  

The mean first passage time to a target  is the average time it takes for a random process to reach the target for the first time, along some un-prespecified path.  This will depend typically on the structure of the underlying graph that represents an evolutionary process on genome space; in our case a group structure. The first passage time is a well-studied quantity in the theory of Markov processes and network flows.  It can be defined on any strongly connected directed graph (that is, for which each pair of vertices is connected by a directed path).  A moment generating function for the first passage time can be computed using the structure of the graph, including closed loops within the graph. This is an approach due to Mason in the 1950s in the system theory literature~\cite{mason1953feedback}.  It can also be computed using determinants of matrices associated with the adjacency matrix of the graph~\cite{butler1997stochastic}. 

We apply the latter approach, using the adjacency matrix of the Cayley graph of a group, $G$, that represents genome space under a particular model of evolution.  The Cayley graph of a group is a graph whose vertices represent the group elements, and each directed edge corresponds to multiplication by one of the generators of the group, with the reverse arrow being multiplication by the inverse element (here we use right multiplication). Note that typically a list of generators is not unique, in which case nor is the Cayley graph or our metric.  The method  builds on the algebraic models of bacterial inversion introduced in~\cite{egrinagy2013group}.  In this framework, each vertex of the Cayley graph  corresponds to a unique genome arrangement and the edges to possible evolutionary events.  While we will consider the general situation, in most examples arising from this framework the generators of the group are self-inverse (for instance an inversion done twice returns the genome to the original state), which means that the edges of the corresponding Cayley graph are often undirected.

Now, assume each (directed) edge $(i,j)$ has a Markov transition probability $p_{ij}$ from $i$ to $j$.  In addition, there is an independent  random  ``passage time''  $X_{ij}$, from $i$ to $j$. The \emph{first} passage time is then the time taken for a random walk starting at $i$  to reach the target state $j$, {\em for the first time}.  We assume that there is at least one path between any two vertices. Once the  probability distribution function of this first passage time is given, our distance $d_{i,j}$ is defined as its mean $\mu_{i,j}$. In summary we assume:
\begin{enumerate}
  \item The moment generating functions, $\{m_{ij}(s)\}$ of the passage times  $\{X_{ij} \}$ are known, and
  \item The Markov transition probabilities $\{p_{ij} \}$ are known.
\end{enumerate}

In the terminology of Markov processes, the Cayley graphs may not yield an aperiodic system, and therefore the underlying Markov chain will not, be positive recurrent.  Thus, that part of the stochastic process theory which requires aperiodicity will not apply.  But the formulae here are quite general and can also be written in terms of purely combinatorial zeta functions for graphs and sub-graphs associated with paths (described in Section~\ref{a:mason.loops}).

The structure of the paper is as follows.  We begin in Section~\ref{s:bio.background} with an introduction to the motivating problem from bacterial genome rearrangements, together with an explanation of the group-theoretic framework with which we study it.  This is followed by a  straightforward account of the Markov flow theory in Section~\ref{s:mason}.  The definition of our mean first passage time distance is given in Section~\ref{s:mfpt.distance}, and this is followed by some fully worked examples in Section~\ref{s:cayley}. We return to the Mason rule in Section~\ref{a:mason.loops} giving links to zeta functions on graphs. A discussion section points to further research.

\section{Background to bacterial genome rearrangements and group-theoretic models}\label{s:bio.background}

In this section we explain the basic biological information about bacterial genome rearrangements, and we present some necessary details of the algebraic models from~\cite{egrinagy2013group} that this paper relies on.

{Large scale rearrangements}, in which whole regions of the chromosome are moved around relative to each other, are a significant driver of evolutionary adaptation in the case of bacteria.   Large scale rearrangements are uncommon in eukaryotic nuclear DNA, though they are a feature of mitochondrial DNA, probably because of its heritage as an ancestral bacterial invasion (almost all bacterial genomes and mitochondrial DNA is circular).   The mechanisms within the cell that give rise to these rearrangements revolve around enzymes called \emph{site-specific recombinases}, which cut two double-helical strands that are adjacent in the cell, and rejoin them in a new way, changing the sequence of the chromosome.  They facilitate the movement of genes around a chromosome (which can have a fitness effect) as well as the acquisition of new genetic material (through {horizontal gene transfer}), and deletion of redundant DNA.

We will focus on single-celled organisms and large scale rearrangements on a single chromosome because they play an important role in establishing phylogenetic relationships in these cases.  This set-up includes all bacteria, but omits models such as the influential double-cut-and-join (DCJ)~\cite{yancopoulos2005efficient,bergeron2006unifying}.  The rearrangement events we will focus on are inversion and translocation, because both of these events are ``invertible'' (can be undone), and so are suited to a group-theoretic treatment~\cite{egrinagy2013group,francis2014algebraic}.  By a \emph{model of rearrangement}, we mean three things: a genome structure; a set of allowable operations that rearrange the regions on the genome; and a probability distribution on the operations.  This slightly generalizes the algebraic structure described in~\cite{egrinagy2013group}, in the inclusion of the probability distribution.  We will always assume that both genomes have the same set of regions in their genomes. 

The two models of general interest we consider and in which we include or omit the orientation of the DNA, are illustrated in Figure~\ref{f:genomes} for the circular case.  The other genome geometry we will mention is a \emph{lineal} chromosome, which has regions arranged along a line, again either including orientation or not (we use the word lineal to avoid confusion from use of the word ``linear'' for this geometry).

\begin{figure}[ht]
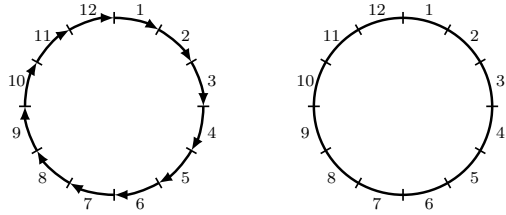

\tiny
\resizebox{7cm}{!}{%
\def\sp{{12,1,2,3,4,5,6,7,8,9,10,11,12}}\signedpermutation \hspace{1cm}\unsignedpermutation
}
\caption{Model genomes on 12 regions, with orientation and without. The regions are simply segments of DNA that have been preserved in the set of genomes under study, and may represent genes or segments containing several genes.}\label{f:genomes}
\end{figure}

\emph{Inversion} takes a region or set of contiguous regions and reverses their relative positions, while  \emph{translocation} takes a region or set of contigous regions and moves it past another set of regions, as illustrated in Figure~\ref{f:inversions}.  Such rearrangements are called \emph{signed} if the orientation of the regions is considered.

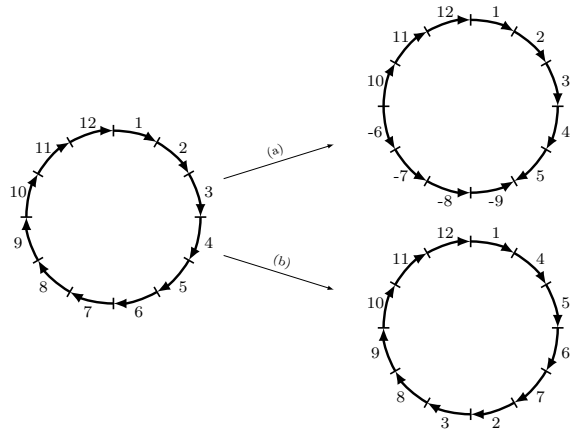
\begin{figure}[ht]
\tiny
\resizebox{8cm}{!}{%
\begin{tikzpicture}
\node (1) {\def\sp{{12,1,2,3,4,5,6,7,8,9,10,11,12}}\signedpermutation};
\begin{scope}[xshift=8cm,yshift=2.5cm]
\node (2) {\def\sp{{12,1,2,3,4,5,-9,-8,-7,-6,10,11,12}}\signedpermutation};
\end{scope}
\draw[->,>=latex](1)--(2) node[pos=0.5,above,sloped] {(a)};
\begin{scope}[xshift=8cm,yshift=-2.5cm]
\node (3) {\def\sp{{12,1,4,5,6,7,2,3,8,9,10,11,12}}\signedpermutation};
\end{scope}
\draw[->,>=latex](1)--(3) node[pos=0.5,above,sloped] {(b)};
\end{tikzpicture}
}
\caption{(a) A signed inversion of positions 6 to 9. (b) A translocation of the regions in positions 2 and 3 past position 7.}\label{f:inversions} 
\end{figure}

The dominant method for establishing distance in any of these models is by calculating the \emph{minimal} distance.  In some cases, this can be done very fast.  In particular, there is a large body of literature establishing the minimal distance under the model in which all inversions occur with equal probability~\cite{sankoff1992gene,bafna1993genome}.  These methods are chiefly combinatorial, constructing a ``breakpoint graph'' whose features give the distance.  Recently, group-theoretic methods have been introduced that rely on results in Coxeter groups, and are effective for inversions of only two adjacent regions (we will call these \emph{2-inversions})~\cite{egrinagy2013group,francis2014algebraic}.

Several alternatives to the minimal distance to estimate the evolutionary distance in rearrangement models have been proposed.  These are summarized in~\cite{serdoz2017maximum}, which develops a way to calculate a maximum likelihood estimate for the evolutionary distance, further developed in~\cite{sumner2017representation}.  Aside from the MLE of~\cite{serdoz2017maximum}, these are dominated by approaches that use the relationships between the minimal distance and the true distance obtained using simulation studies (for instance~\cite{wang2001estimating,dalevi2008expected}). 

In this paper we focus on rearrangement models that are suited to a group-theoretic approach, namely those for which the permitted evolutionary operations are ``invertible'': there is a permitted operation that undoes each one.  Usually in this context such operations are self-inverse: they undo themselves.  
The core of this approach is the observation that if a family of rearrangement events are each invertible, then, because the composition of such events is associative ($a\circ(b\circ c)=(a\circ b)\circ c)$), they generate a group that is acting on the set of genome arrangements.  A sequence of such rearrangement events is then a composition of a sequence of group generators.  
This gives an interpretation of the distance, and of evolutionary history, in terms of paths and path lengths on the Cayley graph of the group~\cite{clark2016bacterial}, which we now describe.  

If one particular genome arrangement is chosen as a ``reference genome'', then every other genome arrangement can be associated with a unique group element defined by the product of the generators on a path to it from the reference genome.  Despite there being possibly many paths to the genome, the product is well-defined regardless of which path is chosen, giving a single group element.  That is, considering genome rearrangements as generators of a group, and choosing a reference genome, defines a one-to-one correspondence between the set of genomes and the set of elements of the group generated by the rearrangements.  

The correspondence between genome arrangements and group elements means that the space of genomes can now be considered as a Cayley graph: the graph whose vertices are elements of the group, and in which there is an edge from group element $g$ to $h$ if $h=gs$ for some generator $s$.  The minimal distance between two genomes is then simply the length of a minimal path on the Cayley graph, and the true evolutionary history is a walk on the Cayley graph.

The methods we develop here are applicable to several models of rearrangement.  First, the widely studied cases of lineal or circular genomes in which regions of DNA are considered as either oriented (signed) or unoriented.  Second, one may allow different evolutionary events, including inversions of different lengths of DNA (measured by the number of regions inverted in a single event), as well as translocations.
Some of the models that are relevant are listed in Table~\ref{tab:models}, in Appendix~\ref{s:evol.models}.

\section{Markov flow models}\label{s:mason}

Consider a directed simple graph $G(V,E)$ with vertex set $V$, $|V|=n$ and edge set $E$ consisting of  ordered pairs $(i,j)$.   By ``simple'', we mean that $G$ has no parallel edges (so that $(i,j)$ defines at most one edge) and no loops $(i,i)$ (edges from a vertex to itself). It helps intuition to consider transition as the movement of a hypothetical particle.

We assume that given any vertices $i$ and $j$, a particle starting at vertex $i$ can reach vertex $j$ along a directed path (this strongly connected condition is satisfied for Cayley graphs, for instance).
If $(i,j) \in E$ and the particle is at vertex $i$, the particle travels directly to vertex $j$ with probability $p_{ij}>0$, and
$\sum_{j \neq i} p_{ij} = 1$ for all $i$; so the particle having reached vertex $i$ must move immediately away from $i$. Let the random variable $X_{ij}$ denote the {\em inter-arrival time} (passage time) along edge $(i,j) \in E$.  We assume that all the $X_{ij}$ for the travel of the particle are independent and that each $X_{ij}$ has a well-defined moment generating function:
$$m_{ij}(s) = {\mathbb E}_{X_{ij}}\{\exp(s X_{ij})\}.$$

Let $Y_{ij}$ be the \emph{first passage time} from vertex $i$ to vertex $j$. To study this we consider the directed subgraph
$G(V,E_{[j]} )$ where
$$E_{[j]} = E \setminus \{ (j,k) : (j,k) \in E\}.$$
That is, we remove all edges out of $j$, thereby turning  $j$ into an absorbing state. The first passage time $Y_{ij}$ is the sum of all  $X_{ij}$ realised from vertex $i$ till the particle arrives at  vertex $j$ for the first time, for the graph $G(V,E_{[j]})$. Note that the particle can spend an arbitrary large but finite length of time in circuits (if there are circuits) and different visits to an edge, say $(i,j)$, are assumed to give independent copies of $X_{ij}$.

What we call here the \emph{Mason rule} (Theorem~\ref{t:mason}) is a version of the original rule which is also available via Markov renewal theory (see~\cite{butler1997stochastic,pyke1961markov,howard1960dynamic,howard1971risk}).
It  is sometimes referred to as the {\em cofactor rule}. The  rule  gives the moment generating function, $\tilde{M}_{ij}(s)$ of the $Y_{ij}$ in terms of the $p_{ij}$ and $m_{ij}(s)$.  

Let $P= \{p_{ij}\}$ denote the Markov matrix for the process, and let $Q = \{p_{ij} m_{ij}(s) \}$ denote the {\em transmittance} matrix $Q(s)$, both of size $|V|\times|V|$, whose entries we will write $q_{ij}(s):=p_{ij} m_{ij}(s)$.  Note that $q_{ij}(s) = 0 $ if $p_{ij} = 0$.  The following theorem is a version of Mason's rule. 

\begin{thm}[Mason's rule~\cite{mason1953feedback}]\label{t:mason}
For a  Markov flow model with transmittance matrix $Q$, the moment generating function $\tilde{M}_{ij}(s)$ of the first passage time $Y_{ij}$ from vertex $i$ to vertex $j$, is given by the ${i,j}$ entry in the matrix $(I - Q_{[j]})^{-1}(s)$, where $Q_{[j]}$ is  obtained from $Q(s)$ by setting all transmittances $q_{jk},\;(j,k) \in E$ equal to zero. 
\end{thm}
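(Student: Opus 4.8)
The plan is to expand the moment generating function $\tilde{M}_{ij}(s) = {\mathbb E}[\exp(sY_{ij})]$ as a sum over all walks from $i$ to $j$ that reach $j$ for the first time, and to recognise that sum as the $(i,j)$ entry of the Neumann series $\sum_{L\ge 0}(Q_{[j]})^{L} = (I - Q_{[j]})^{-1}$. To set up the per-walk computation I would fix a walk $\gamma = (i = v_0, v_1, \ldots, v_L = j)$ in the absorbing graph $G(V, E_{[j]})$, that is, one reaching $j$ only at its final step. By the Markov property the probability that the particle follows $\gamma$ is $\prod_{k=1}^{L} p_{v_{k-1}v_k}$, while conditional on this event the passage time is $\sum_{k=1}^{L} X^{(k)}$ with each $X^{(k)}$ an independent copy of $X_{v_{k-1}v_k}$ (here the standing assumption that distinct visits to an edge yield independent copies is essential, since $\gamma$ may repeat edges). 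Independence makes the conditional moment generating function factor as $\prod_{k=1}^{L} m_{v_{k-1}v_k}(s)$, so the contribution of $\gamma$ to $\tilde{M}_{ij}(s)$ is $\prod_{k=1}^{L} p_{v_{k-1}v_k}\,m_{v_{k-1}v_k}(s) = \prod_{k=1}^{L} q_{v_{k-1}v_k}(s)$.

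Next I would assemble these contributions. Because all edges out of $j$ are deleted in $E_{[j]}$, any walk counted by the matrix entry $[(Q_{[j]})^{L}]_{ij}$ can visit $j$ only at its last step, so these are exactly the first-passage walks of length $L$; reading off the matrix power gives $[(Q_{[j]})^{L}]_{ij} = \sum_{|\gamma| = L}\prod_{e\in\gamma} q_e(s)$, the sum being over first-passage walks of length $L$ from $i$ to $j$. Summing over $L \ge 0$ partitions the set of all first-passage walks by length, whence $\tilde{M}_{ij}(s) = \sum_{L\ge 0}[(Q_{[j]})^{L}]_{ij} = [(I - Q_{[j]})^{-1}]_{ij}$, as claimed, provided the matrix series converges.

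The main obstacle is precisely this convergence, which is also what gives the inverse its meaning. At $s = 0$ one has $q_{ij}(0) = p_{ij}$, so $Q_{[j]}(0) = P_{[j]}$ is the substochastic matrix of the chain with $j$ made absorbing; strong connectivity ensures $j$ is reached from every vertex with probability one, whence $P_{[j]}$ restricted to $V\setminus\{j\}$ has spectral radius strictly below $1$ and $(I - P_{[j]})^{-1}$ is the familiar fundamental matrix of the absorbing chain. Continuity of the $m_{ij}(s)$ then keeps the spectral radius of $Q_{[j]}(s)$ below $1$ for $s$ in a neighbourhood of $0$, so the Neumann series converges and represents $\tilde{M}_{ij}(s)$ analytically there — exactly the regime needed to extract the mean $\mu_{ij} = \tilde{M}'_{ij}(0)$ that defines the distance. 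Finally, a Tonelli argument justifies interchanging the expectation with the sum over walks, once the $s = 0$ series is known to have total mass $1$.
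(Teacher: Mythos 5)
Your proof is correct, but note that the paper never actually proves Theorem~\ref{t:mason}: it quotes the result as known, attributing it to Mason's signal-flow-graph work and to Markov renewal theory \cite{mason1953feedback,butler1997stochastic,pyke1961markov,howard1960dynamic,howard1971risk}, and only remarks that the inverse exists by the theory of absorbing chains. Your argument --- partition the sample space by the first-passage walk $\gamma$, use independence of the routing choices and the inter-arrival times to factor each walk's contribution into $\prod_{e\in\gamma} q_e(s)$, identify the length-$L$ stratum with $[(Q_{[j]})^L]_{ij}$ (zeroing out row $j$ is exactly what annihilates walks that touch $j$ before the final step), and sum the Neumann series --- is the standard self-contained renewal-theoretic proof, i.e.\ a worked-out version of the route the paper outsources to its references. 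It also correctly isolates the one analytic point needing care: substochasticity of $P_{[j]}$ together with certain absorption (from strong connectivity) gives spectral radius strictly below $1$ at $s=0$, and continuity of the $m_{ij}$ extends this to a neighbourhood of $0$, which is precisely the regime in which the paper later differentiates to extract $d_{ij}$. By contrast, the closest thing to a derivation that the paper itself presents (Section~\ref{a:mason.loops}) runs in the opposite direction: it takes the matrix-inverse form as the starting point and re-expresses each entry via Sylvester's (cofactor) rule as $\det\bigl((I-Q_{[j]})_{[ij]}\bigr)/\det\bigl(I-Q_{[j]}\bigr)$, expanded combinatorially over loops (``bundles'') and non-self-intersecting paths. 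That formulation buys closed-form evaluation and the zeta-function interpretation central to the paper, but it presupposes, rather than proves, the identification of the entries of $(I-Q_{[j]})^{-1}$ with the moment generating functions of the $Y_{ij}$ --- which is exactly the content your walk expansion supplies.
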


Note that the matrix inverse in the theorem exists via the theory of absorbing states for Markov chains.

\section{The mean first passage time distance}\label{s:mfpt.distance}

We define our distance  $d_{ij}$, $ i \rightarrow j$ as follows.

\begin{defn}
For a directed graph $G(V,E)$ with Markov transition matrix $P=\{ p_{ij}\}$ and inter-arrival moment generating function matrix $M(s) = \{m_{ij}(s)\}$, we define the mean first passage time (MFPT) distance as
$$d_{ij} = {\mathbb E}_{Y_{ij}} (Y_{ij}),$$
where $Y_{ij}$ is the first passage time from vertex $i$ to vertex $j$.
\end{defn}

The distance depends on $M(s)$ only via the edge means $\mu_{ij} = \mbox{E} (X_{ij}) =  m'_{ij}(0)$.
Let $e_i$ be the $i$-th unit vector (1 in entry $i$ and 0 elsewhere).
Then, using the $e_i$ to pick out the entries, Mason's rule (Theorem~\ref{t:mason}) immediately gives us:
\begin{prop}\label{p:dij.formula}
\[
d_{ij}  =  \frac{\partial}{\partial s} \left( e_i^T \left(I - Q_{[j]}(s)\right)^{-1} e_j\right) \big\vert_{s = 0}.
\]
\end{prop}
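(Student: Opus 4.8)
The plan is to read off the proposition as the first-moment specialization of Mason's rule (Theorem~\ref{t:mason}). The key observation is that $d_{ij} = \mathbb{E}_{Y_{ij}}(Y_{ij})$ is, by the defining property of moment generating functions, the first derivative at the origin of the MGF $\tilde{M}_{ij}(s) = \mathbb{E}(\exp(s Y_{ij}))$ of the first passage time. Since Mason's rule already identifies $\tilde{M}_{ij}(s)$ with the $(i,j)$ entry $e_i^T (I - Q_{[j]}(s))^{-1} e_j$ of the resolvent, the formula follows by differentiating this expression and setting $s = 0$.

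First I would establish that $\tilde{M}_{ij}$ is finite and smooth in a neighborhood of $s = 0$, so that the moment identity $\tilde{M}'_{ij}(0) = \mathbb{E}(Y_{ij})$ applies. This is where the genuine work lies. At $s = 0$ each edge MGF satisfies $m_{ij}(0) = \mathbb{E}(e^0) = 1$, so $Q_{[j]}(0) = P_{[j]}$, the substochastic matrix of the chain with all edges out of $j$ deleted. Because $j$ is reachable from every vertex (the strong-connectivity assumption) and is absorbing, $P_{[j]}$ has spectral radius strictly below $1$, so $I - P_{[j]}$ is invertible---this is the existence of the fundamental matrix noted after Theorem~\ref{t:mason}. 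The entries $q_{ij}(s) = p_{ij} m_{ij}(s)$ are smooth in $s$ near $0$, and invertibility is an open condition, so $I - Q_{[j]}(s)$ remains invertible with smooth inverse for $s$ in a neighborhood of the origin; hence by Mason's rule $\tilde{M}_{ij}$ is smooth there and the derivative in the statement is well defined.

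Next I would invoke the standard theorem that an MGF finite on an open interval about $0$ is differentiable there with $\tilde{M}'_{ij}(0) = \mathbb{E}(Y_{ij})$. Combining this with Mason's substitution $\tilde{M}_{ij}(s) = e_i^T (I - Q_{[j]}(s))^{-1} e_j$ gives
\[
d_{ij} = \mathbb{E}_{Y_{ij}}(Y_{ij}) = \tilde{M}'_{ij}(0) = \frac{\partial}{\partial s}\Bigl(e_i^T (I - Q_{[j]}(s))^{-1} e_j\Bigr)\Big|_{s = 0},
\]
which is the asserted identity. The vectors $e_i$ and $e_j$ are constant in $s$, so the derivative passes inside and acts only on the resolvent.

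The main obstacle is the regularity step of the second paragraph: verifying that the resolvent $(I - Q_{[j]}(s))^{-1}$ is defined and smooth across a full neighborhood of $s = 0$, rather than merely at the single point $s = 0$. Once that is secured, the remaining content is the formal moment identity together with a direct substitution, which explains why the surrounding text can assert the result ``immediately''. Should an explicit expression be desired, one could apply the identity $\frac{d}{ds} A(s)^{-1} = -A(s)^{-1} A'(s) A(s)^{-1}$ with $A(s) = I - Q_{[j]}(s)$ and evaluate at $s = 0$, but the proposition as stated requires only the differentiated form above.
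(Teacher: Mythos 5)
Your proposal is correct and follows exactly the paper's route: the paper simply notes that Mason's rule (Theorem~\ref{t:mason}) identifies $\tilde{M}_{ij}(s)$ with $e_i^T(I-Q_{[j]}(s))^{-1}e_j$, and since $d_{ij}=\mathbb{E}(Y_{ij})=\tilde{M}'_{ij}(0)$, the formula follows ``immediately.'' The only difference is that you supply the regularity argument (invertibility of $I-P_{[j]}$ via the absorbing-state/spectral-radius property persisting on a neighborhood of $s=0$) explicitly, which the paper relegates to the remark after Theorem~\ref{t:mason}; this is a welcome tightening rather than a different approach.
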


While this is already a practically useful expression, we can further develop an explicit expression for $d_{ij}$ as follows.
By the formula for differentiating inverses,
\begin{eqnarray*}
d_{ij} & = & \frac{\partial}{\partial s} \left( e_i^T \left(I - Q_{[j]}(s)\right)^{-1} e_j\right) \big\vert_{s = 0}\\
         & =  &-\left\{e_i^T \left(I- Q_{[j]}(s)\right)^{-1} \left(\frac{\partial}{\partial s} \left( I - Q_{[j]}(s) \right)\right) \left(I - Q_{[j]}(s)\right)^{-1}e_j \right\} \Big\vert_{s=0}.\notag
\end{eqnarray*}

Note that
$$I-Q_{[j]}(s) = I - P _{[j]} \circ M_{[j]}(s),$$
where  ``$ \circ $'' means the Schur (Hadamard, entry-wise) product, and $P_{[j]}$ is the matrix $P$ with the $j$-th row set to zero.

Define $M = M'(0) =  \{\mu_{ij}\} $ to be the matrix of  passage time edge means and write $M_{[j]}=M'_{[j]}(0)$ for the matrix formed from $M$  by excluding means for edges out of vertex $j$  (so that the $j$-th row is zero). Then, using \ref{t:mason} we have an explicit formula for the distances
(noting that $m_{ij}(0) = 1$ for all $(i,j) \in E$ with $i \neq j$):
$$
d_{ij} = e_i (I - P_{[j]})^{-1}(P_{[j]} \circ M_{[j]})(I-  P_{[j]})^{-1} e_j.
$$
Examples of this computation can be found in Section~\ref{s:cayley}, exploiting the efficiency of the linear algebra version in Proposition~\ref{p:dij.formula}.

Note that the $(r,t)$ entry of $(I-P_{[j]}))^{-1}$ is the passage time from $r$ to $t$ for the graph in which $j$ has been made an absorbing state and in which every edge has fixed unit travel time. Let us label these entries $\{z_{rt}\}$. Then expanding the matrices we have
$$
d_{ij} = \sum_{ (r,t)} z_{ir} p_{rt}\mu_{rt} z_{tj}.
 $$
Each term $\mu_{rt}$ in this expansion has coefficient
$$\theta_{r,t\;:\;i,j} = z_{ir}z_{tj} p_{rt},$$
whose informal interpretation is as follows. 
Each distinct path from $i$ to $j$ may or may not use the edge $(r,t)$. Those
that use $(r,t)$ each contribute a weight to the mean $\mu_{rt}$. Each such path must enter at $r$
and leave at $t$. The paths into $r$ have total passage time $z_{ir}$ and those out of $t$ and being absorbed at $j$ have total passage time $z_{tj}$. Independence gives the  contribution as $z_{ir}z_{tj}p_{rt}$. A technical point is that because the matrix $P_{[j]}$ has an absorbing state, and the full chain is connected, all the terms are finite.

As defined, the distance $d_{ij}$ satisfies the directional triangular inequality: $d_{ik} \leq d_{ij} + d_{jk}$ for distinct vertices $i,j,k$. This is proved by splitting events into two disjoint types: (i) starting in $i$ we reach $j$ first before $k$ and (ii) starting in $i$ we reach $k$ first before $j$. In both cases the inequality holds:  in (i) we have equality and in (ii) by assumption $d_{ik} \leq d_{ij}$. The symmetry needed to be an ordinary distance requires $d_{ij} = d_{ji}$ for all $i \neq j$. The Cayley graph version defined next satisfies this condition.

\section{The Cayley graph case}\label{s:cayley}

As explained in the introduction, the distance $d_{ij}$ between group elements $g_i$ and $g_j$ is defined to be the mean of the first passage time, when the full directed graph is the Cayley graph, $C(G)$ of the group generated by the elements corresponding to the evolutionary events of interest.

Although the theory allows a general moment generating function $m_{ij}(s)$, for simplicity 
in the Cayley graph case we set all the $m_{ij}$ equal and all the $p_{ij}$ equal.  If there are $k$  generators for the group,
then for any edge $(i,j)$ of $C(G)$,
$$q_{ij}(s) = \frac{1}{k}m(s),$$
and
$$Q(s) = \frac{1}{k}m(s)A(G),$$
where $A(G)$ is the adjacency matrix of the Cayley graph of $G$: $C(G)$. Then we have a simple version:
\begin{equation}\label{simple}
I - Q_{[j]}(s)  = I - z A_{[j]} 
\end{equation}
with $z= \frac{1}{k}m(s)$ and $A_{[j]}$ is the adjacency matrix of the graph obtained from
$C(G)$, by deleting all the arrows out of  $j$. 

We begin with an elementary result that describes when two group elements have the same mean first passage time (considered as distances from the identity element), contingent on properties related to the \emph{normaliser} of the set of generators $S$ of the group $G$.  Recall that the normaliser of $S$ in $G$ is the subgroup of $G$ defined by  $N_G(S):=\{\sigma\in G\mid \sigma^{-1}S\sigma=S\}$.  The normaliser acts on the group $G$ by conjugation, and the orbits of this action partition both the set of generators, by definition of the normaliser, and the group itself, by definition of an orbit.

\begin{prop}\label{p:normaliser}
Suppose we have a random walk on a Cayley graph $C(G)$ beginning at the identity element, in which the generators $S$ of $G$ are all equally probable.  Suppose in addition that the inter-arrival times $X_{ij}$ have the same distribution for all edges $(i,j)$ corresponding to generators that are in the same orbit of the action of $N_{G}(S)$. Then, if two group elements are in the same orbit of $N_G(S)$, they will have the same mean first passage time.
\end{prop}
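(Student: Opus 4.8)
The plan is to realise each element of $N_G(S)$ as a symmetry of the Cayley graph that fixes the identity and preserves the Markov flow. Fix $\sigma\in N_G(S)$ and consider conjugation $\phi_\sigma\colon G\to G$, $g\mapsto \sigma^{-1}g\sigma$. First I would verify that $\phi_\sigma$ is a graph automorphism of $C(G)$ fixing the identity vertex. It certainly fixes $e$, and it is a bijection since conjugation is a group automorphism. For the edges (recall these are given by right multiplication by generators): if $h=gs$ with $s\in S$, then $\phi_\sigma(h)=\sigma^{-1}gs\sigma=\phi_\sigma(g)\,(\sigma^{-1}s\sigma)$, and $\sigma\in N_G(S)$ forces $s':=\sigma^{-1}s\sigma\in S$, so $(\phi_\sigma(g),\phi_\sigma(h))$ is again an edge. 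Thus $\phi_\sigma$ permutes the vertices and carries edges to edges.

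The second step is to check that $\phi_\sigma$ preserves the transmittance matrix $Q(s)$. Because all generators are equiprobable, $p_{g,gs}=1/k$ on every edge, so the transition probabilities are trivially invariant. For the passage times, the edge $(g,gs)$ carries the generator label $s$ whereas its image carries the label $s'=\sigma^{-1}s\sigma$; since $\sigma\in N_G(S)$, the generators $s$ and $s'$ lie in the same orbit of the conjugation action of $N_G(S)$ on $S$, so by hypothesis $X_{g,gs}$ and its image are equal in distribution and share a moment generating function. Writing $\pi=\phi_\sigma$, this says $q_{ij}(s)=q_{\pi(i),\pi(j)}(s)$ for all $i,j$ and all $s$; equivalently, with $\Pi$ the permutation matrix of $\pi$, $\Pi^{-1}Q(s)\Pi=Q(s)$.

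The third step transfers this symmetry through Mason's rule. Since relabelling by $\pi$ intertwines ``make $j$ absorbing'' with ``make $\pi(j)$ absorbing'', one checks that $\Pi\,Q_{[j]}(s)\,\Pi^{-1}=Q_{[\pi(j)]}(s)$, and conjugating the resolvent then gives
\[
e_i^T\left(I-Q_{[j]}(s)\right)^{-1}e_j = e_{\pi(i)}^T\left(I-Q_{[\pi(j)]}(s)\right)^{-1}e_{\pi(j)}.
\]
By Theorem~\ref{t:mason} the two sides are the moment generating functions $\tilde M_{ij}(s)$ and $\tilde M_{\pi(i),\pi(j)}(s)$, so differentiating at $s=0$ gives $d_{ij}=d_{\pi(i),\pi(j)}$. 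Taking $i=e$, so that $\pi(e)=e$, and $j=g$ yields $d_{e,g}=d_{e,\sigma^{-1}g\sigma}$. Since two elements of $G$ lie in the same $N_G(S)$-orbit precisely when one equals $\sigma^{-1}g\sigma$ for some $\sigma\in N_G(S)$, this is exactly the claim.

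The main obstacle I anticipate is bookkeeping rather than conceptual depth. The crux is the verification that conjugation sends each edge-label $s$ to another generator lying in the \emph{same} $N_G(S)$-orbit; this is precisely where the defining property $\sigma^{-1}S\sigma=S$ of the normaliser is used, and it is also the only place the orbit hypothesis on the passage times is invoked. One must also be careful that the absorbing-state operation $Q\mapsto Q_{[j]}$ commutes correctly with relabelling by $\pi$, so that the target $g$ maps to $\sigma^{-1}g\sigma$ and not to some other vertex. Both checks are routine given that edges are defined by right multiplication, but they are where the hypotheses genuinely enter.
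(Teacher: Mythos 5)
Your proof is correct, but it takes a genuinely different route from the paper. The paper's own proof is a two-line citation: it invokes Theorem 4.2 of the Clark--Egri-Nagy--Francis--Gebhardt paper (\emph{Bacterial phylogeny in the Cayley graph}), which says that elements conjugate under $N_G(S)$ have order-isomorphic intervals from the identity, and then asserts that the equiprobability and inter-arrival hypotheses upgrade this combinatorial isomorphism to equality of first passage time distributions. You instead construct the symmetry explicitly: conjugation $\phi_\sigma(g)=\sigma^{-1}g\sigma$ is an automorphism of $C(G)$ fixing $e$ (this is where $\sigma^{-1}S\sigma=S$ enters), it preserves the transmittance matrix because edge labels move within a single $N_G(S)$-orbit (this is where the orbit hypothesis on the $X_{ij}$ enters), and the resulting permutation matrix identity $\Pi Q_{[j]}\Pi^{-1}=Q_{[\pi(j)]}$ pushes through Mason's rule to give equality of the moment generating functions $\tilde M_{e,g}=\tilde M_{e,\sigma^{-1}g\sigma}$, hence of the means. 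Your version buys two things: it is self-contained, with no dependence on the external interval-isomorphism theorem; and it is arguably tighter on a point the paper glosses over --- the first passage time depends on the \emph{whole} graph (the walk may wander far outside the interval $[e,g]$ before absorption), so an isomorphism of intervals alone does not obviously suffice, whereas your global transmittance-preserving automorphism does exactly the required job. What the paper's route buys is brevity and a connection to known combinatorial structure on weak-order-type intervals; note also that both arguments actually deliver the stronger conclusion that the full first passage time distributions, not merely their means, coincide.
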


\begin{proof}
Theorem 4.2 of~\cite{clark2016bacterial} shows that two group elements conjugate by the normaliser of the generators will have order-isomorphic ``intervals'', meaning that the path structures from the identity to each of them are isomorphic as partially ordered sets.  The additional requirements here about the probabilities of the generators, and the inter-arrival times, ensures that the distribution of first passage times to each element will be the same, and in particular the mean first passage times will be the same.
\end{proof}
A key property of the Cayley graph is {\em vertex transitivity}, which, heuristically, means that the graph looks the same from any vertex. Thus, any edge $ g \rightarrow gs$, where $s$ is a generator can be transformed to $e \rightarrow s$ by left multiplying by $g^{-1}$.  We see, more generally, that  the whole graph remains the same if we premultiply by a fixed $g^{-1}$ at every vertex.

For simplicity, in the examples below we will use $m_{ij}(s) = e^s$, for all $(i,j) \in E$, which corresponds to a fixed (non-random) time between vertices of one unit. 
That is, we take $X_{ij}$ to be a discrete random variable which takes the value 1 with probability 1, so that the moment generating function is just $e^s$.

\subsection{Example: $S_3$ with standard (Coxeter) generators}

We carry out the computations for the Cayley graph of the symmetric (permutation) group on three elements.
With identity $e$ and two generators $g_1$ and $g_2$, respectively (transpositions $(1\ 2)$ and $(2\ 3)$ as shown in Figure~\ref{f:S3.circ}, acting on the right), the elements, labeled as vertices are
\begin{align*}\{&v_1=e,v_2= g_1=(1\ 2), v_3= g_2=(2\ 3),\\ &v_4 =g_1g_2=(1\ 2\ 3),v_5 = g_2g_1=(1\ 3\ 2), 
v_6 = g_1g_2g_1= g_2g_1g_2=(1\ 3)\}.\end{align*}
\begin{figure}[ht]
\includegraphics[height=4cm]{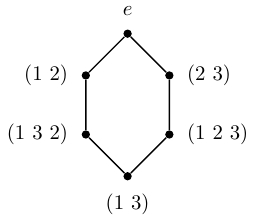}\hspace{1cm}
\includegraphics[height=4cm]{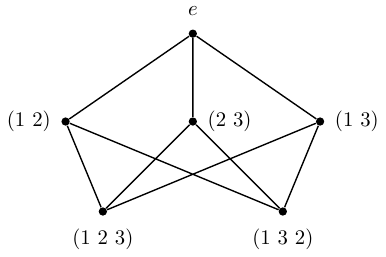} 
\caption{Cayley graphs of $S_3$ with standard generators $\{(1\ 2),(2\ 3)\}$ and circular generators $\{(1\ 2),(2\ 3),(1\ 3)\}$.}
\label{f:S3.circ}
\end{figure}

The adjacency matrix and its absorbing form based on $v_6$ are
$$ A = \left(
         \begin{array}{cccccc}
           0 & 1 & 1 & 0 & 0 & 0 \\
           1 & 0 & 0 & 1 & 0 & 0 \\
           1 & 0 & 0 & 0 & 1 & 0 \\
           0 & 1 & 0 & 0 & 0 & 1 \\
           0 & 0 & 1 & 0 & 0 & 1 \\
           0 & 0 & 0 & 1 & 1 & 0 \\
         \end{array}
       \right),\;\;
A_{[6]} = \left(
         \begin{array}{cccccc}
           0 & 1 & 1 & 0 & 0 & 0 \\
           1 & 0 & 0 & 1 & 0 & 0 \\
           1 & 0 & 0 & 0 & 1 & 0 \\
           0 & 1 & 0 & 0 & 0 & 1 \\
           0 & 0 & 1 & 0 & 0 & 1 \\
           0 & 0 & 0 & 0 & 0 & 0 \\
         \end{array}
       \right).
$$
The expressions below give the entries of the matrix $(I - z A_{[6]})^{-1}$.
\begin{align*}
((I - z A_{[2]})^{-1})_{1,2} = ((I - z A_{[6]})^{-1})_{4,6} & = \frac{z-2z^3}{1-3z^2}\\
((I - z A_{[4]})^{-1})_{1,4} = ((I - z A_{[6]})^{-1})_{2,6} & = \frac{z^2}{1-3z^2}\\
((I - z A_{[6]})^{-1})_{1,6} & = \frac{2z^3}{1-3z^2}.
\end{align*}
Note that by vertex transitivity, we do not need to calculate $A_{[2]}$ or $A_{[4]}$: the set of paths from vertex 1 to vertex 4 ($e$ to $g_1g_2$) is in one-to-one correspondence with the set of paths from vertex 2 to vertex 6 ($g_1$ to $g_1g_2g_1$). 

We now use the simple form (\ref{simple}), setting $k=2$, $z=\frac{1}{2}e^s$,  differentiating with respect to $s$ and setting $s=0$. This gives the distinct mean first passage times   $e$ to (i) $v_2$ or $v_3$,  (ii) $v_4$ or $v_5$ and (iii) $v_6$, as, respectively,  5, 8 and 9.  This compares to the shortest (geodesic) distances of 1,  2 and 3 respectively.

That some distances are equal  supports the fact that in the group theoretic sense elements which are conjugate by the normaliser of the generators will have identical mean first passage time, as in Proposition~\ref{p:normaliser}.

\subsection{Example: $S_3$ with circular generators}

Working with circular generators adds a generator across the top of the appropriate circle diagram (representing the circular genome with three regions), namely $(1\ 3)$, so that the generators are $\{(1\ 2),(2\ 3),(1\ 3)\}$ (note that we use cycle notation). The Cayley graph is a complete bipartite graph, namely $K_{3,3}$ (Figure~\ref{f:S3.circ}).

Ordering the elements $e,(12),(23),(13),(123),(132)$, the adjacency matrix with the last state absorbing is
\[ A_{[6]} = \left(
         \begin{array}{cccccc}
           0 & 1 & 1 & 1 & 0 & 0 \\
           1 & 0 & 0 & 0 & 1 & 1 \\
           1 & 0 & 0 & 0 & 1 & 1 \\
           1 & 0 & 0 & 0 & 1 & 1 \\
           0 & 1 & 1 & 1 & 0 & 0 \\
           0 & 0 & 0 & 0 & 0 & 0 \\
         \end{array}
       \right).
\]
The entries in the last column of $(I-zA_{[6]})^{-1}$ are $3z^2/(1-6z^2),
z/(1-6z^2),
z/(1-6z^2),
z/(1-6z^2),
3z^2/(1-6z^2),
1$ respectively.
By substituting $k=3, z=\frac{1}{3}e^s$, differentiating with respect to $s$ and then setting $s=0$, we obtain mean first passage times $(6,5,5,5,6,0)$ (which compare to $(2,1,1,1,2,0)$ for the shortest distance). This means the MFPT distance between any two elements on opposite sides of the bipartite graph is 5, and between two distinct elements on the same side is 6.

\subsection{Example: $S_4$ with standard and circular generators}

The Cayley graphs for $S_4$, for both standard and circular generating sets are shown in Figure~\ref{f:cayley.S4}, Appendix~\ref{s:cayley.figs}.

Instead of calculating the inverse of the matrix $(I-zA_{[24]})$ (which is computationally difficult), we calculate the terms we need by using the fact that any distance on the Cayley graph is equivalent to a distance to the longest word (see for instance~\cite[Part I]{humphreys1990reflection}).  So we only need to compute that final column of $(I-zA_{[24]})^{-1}$, which can be done by a simple row reduction.  That is to say, if the final column is given by the vector $v$ then it is the solution to the matrix equation $(I-zA_{[24]})v=e_{24}$, (recalling that $e_{24}\in\Q^{24}$ is the vector whose entries are zero except for a 1 in the last position).

Substituting $z=\frac{1}{3} e^s$, differentiating with respect to $s$, and evaluating at $s=0$, we have the distances to the longest word shown in Table~\ref{t:fpt.distances.S4.standard}.

\begin{table}
\begin{tabular}{ccc||ccc}
\multicolumn{3}{c}{lineal generators} & \multicolumn{3}{c}{Circular generators} \\
\hline
 group       & minimal   & MFPT 		& group        & minimal   & MFPT   	\\
 element     & distance  & distance 	&    element   & distance  & distance\\
\hline	
$()$           & 6 &  1296/28 = 46.3	&  $()$           & 4 &  32 \\
$(1\ 2)$       & 5 &  1273/28 = 45.5	&  $(1\ 2)$       & 3 &  31 \\
$(3\ 4)$       & 5 &  1273/28 = 45.5	&  $(2\ 3)$       & 3 &  31 \\
$(2\ 3)$       & 5 &  1258/28 = 44.9	&  $(1\ 4)$       & 3 &  31 \\
$(1\ 2)(3\ 4)$ & 4 &  1242/28 = 44.4	&  $(1\ 3)$       & 3 &  31 \\
$(1\ 3\ 2)$    & 4 &  1197/28 = 42.8	&  $(3\ 4)$       & 3 &  31 \\
$(1\ 2\ 3)$    & 4 &  1197/28 = 42.8	&  $(2\ 4)$       & 3 &  31 \\
$(2\ 4\ 3)$    & 4 &  1197/28 = 42.8	&  $(1\ 4\ 3\ 2)$ & 3 &  31 \\
$(2\ 3\ 4)$    & 4 &  1197/28 = 42.8	&  $(1\ 2\ 3\ 4)$ & 3 &  31 \\
$(1\ 3)$       & 3 &  1153/28 = 41.2	&  $(1\ 3\ 2)$    & 2 &  30 \\
$(2\ 4)$       & 3 &  1153/28 = 41.2	&  $(1\ 2\ 3)$    & 2 &  30 \\
$(1\ 3\ 4\ 2)$ & 3 &  1096/28 = 39.1	&  $(2\ 4\ 3)$    & 2 &  30 \\
$(1\ 2\ 4\ 3)$ & 3 &  1096/28 = 39.1	&  $(1\ 4\ 3 )$   & 2 &  30 \\
$(1\ 2\ 3\ 4)$ & 3 &  1081/28 = 38.6	&  $(2\ 3\ 4)$    & 2 &  30 \\
$(1\ 4\ 3\ 2)$ & 3 &  1081/28 = 38.6	&  $(1\ 4\ 2)$    & 2 &  30 \\
$(1\ 4\ 3)$    & 2 &  981 /28 = 35.0	&  $(1\ 3\ 4)$    & 2 &  30 \\
$(1\ 4\ 2)$    & 2 &  981 /28 = 35.0	&  $(1\ 2\ 4)$    & 2 &  30 \\
$(1\ 3\ 4)$    & 2 &  981 /28 = 35.0	&  $(1\ 2)(3\ 4)$ & 2 &  28 \\
$(1\ 2\ 4)$    & 2 &  981 /28 = 35.0	&  $(1\ 4)(2\ 3)$ & 2 &  28 \\
$(1\ 3)(2\ 4)$ & 2 &  810 /28 = 28.9	&  $(1\ 2\ 4\ 3)$ & 1 &  23 \\
$(1\ 4)$       & 1 &  682 /28 = 24.4	&  $(1\ 3\ 4\ 2)$ & 1 &  23 \\
$(1\ 4\ 2\ 3)$ & 1 &  625 /28 = 22.3	&  $(1\ 4\ 2\ 3)$ & 1 &  23 \\
$(1\ 3\ 2\ 4)$ & 1 &  625 /28 = 22.3	&  $(1\ 3\ 2\ 4)$ & 1 &  23 \\
$(1\ 4)(2\ 3)$ & 0 &  0 				&  $(1\ 3)(2\ 4)$ & 0 &  0  \\
\hline
\end{tabular}
\caption{First passage time distances in $S_4$ to the longest word, using Coxeter generators (lineal genome) on the left ($\{(1\ 2),(2\ 3),(3\ 4)\}$) and circular generators on the right ($\{(1\ 2),(2\ 3),(3\ 4),(4\ 1)\}$), in comparison to the minimal distances in the middle columns. Ordering by minimal length in each case. }\label{t:fpt.distances.S4.standard.length.ordering}\label{t:fpt.distances.S4.standard}
\end{table}

In Table~\ref{t:fpt.distances.S4.standard.length.ordering}, one can  also observe the additional symmetry in the generating set provided by the circular generators (corresponding to inversions on a circular genome as in~\cite{egrinagy2013group}), as follows.  
Recalling Proposition~\ref{p:normaliser}, observe that with the lineal generators
$\{(1\ 2),(2\ 3),(3\ 4)\},$
the normaliser of these is trivial,  namely $\{e\}$. This is in contrast to the case of  circular generators, 
$\{(1\ 2),(2\ 3),(3\ 4),(4\ 1)\},$
for which the normaliser is
$$\{e,(1\ 3),(2\ 4),(1\ 3)(2\ 4),(1\ 2\ 3\ 4),(1\ 4\ 3\ 2),(1\ 4)(2\ 3),(1\ 2)(3\ 4)\}.$$  
More elements are conjugate by elements of the (larger) normalizer in the circular case and so we see in Table~\ref{t:fpt.distances.S4.standard.length.ordering} that there are fewer distinct mean first passage times using the circular generators than the lineal.  

\subsection{Abelian groups}\label{s:abelian}
For some general classes of groups we are able to obtain exact combinatorial formulae for the distance. Here we give the result
for the  abelian group of order $2^k$ with $k$  generators $a_1, \ldots, a_k$ and each element of order two:
$a_1^2 = a_2^2 = \cdots  = a_k^2=e$, where $e$ is the identity, and an  example to show the method (the full proof will appear in a separate paper).

Consider the case $n = 4$. The elements of the Cayley graph can be divided into 5 sets, according to the lengths of the elements in terms of the generators: 
$$
\begin{array}{c}
L_0=\{e\}, L_1=\{a_1,a_2, a_3,a_4\}, L_2= \{a_1a_2, a_1a_3, a_1a_4, a_2 a_3, a_2 a_4, a_3a_4 \} , \\
L_3=\{ a_1a_2a_3, a_1a_2a_4, a_1a_3a_4, a_2a_3a_4\},  L_4= \{ a_1a_2 a_3a_4\}.
\end{array}
$$
To understand the structure of the Cayley graph it is convenient to work inductively, doubling the size of the matrix every time we add a new generator.   Thus for $n=4$  the rows and columns are ordered as follows:
$$
\begin{array}{c}
e, a_1 \\
a_2,a_1a_2 \\ 
a_3,a_1a_3,a_2a_3,a_1a_2a_3\\
a_4,a_1a_4,a_2a_4,a_1a_2a_4,a_3a_4,a_1a_3a_4,a_2a_3a_4,a_1a_2a_3a_4.
\end{array}
$$

The adjacency matrix is then:
$$A = 
\left(
\begin{array}{cccccccccccccccc}
0 & 1 & 1 & 0 & 1 & 0 & 0 & 0 & 1 & 0 & 0 & 0 & 0 & 0 & 0 & 0 \\
1 & 0 & 0 & 1 & 0 & 1 & 0 & 0 & 0 & 1 & 0 & 0 & 0 & 0 & 0 & 0 \\
1 & 0 & 0 & 1 & 0 & 0 & 1 & 0 & 0 & 0 & 1 & 0 & 0 & 0 & 0 & 0 \\
0 & 1 & 1 & 0 & 0 & 0 & 0 & 1 & 0 & 0 & 0 & 1 & 0 & 0 & 0 & 0 \\
1 & 0 & 0 & 0 & 0 & 1 & 1 & 0 & 0 & 0 & 0 & 0 & 1 & 0 & 0 & 0 \\
0 & 1 & 0 & 0 & 1 & 0 & 0 & 1 & 1 & 0 & 0 & 0 & 0 & 1 & 0 & 0 \\
0 & 0 & 1 & 0 & 1 & 0 & 0 & 1 & 1 & 0 & 0 & 0 & 0 & 0 & 1 & 0 \\
0 & 0 & 0 & 1 & 0 & 1 & 1 & 0 & 1 & 0 & 0 & 0 & 0 & 0 & 0 & 1 \\
1 & 0 & 0 & 0 & 0 & 0 & 0 & 0 & 0 & 1 & 1 & 0 & 1 & 0 & 0 & 0 \\
0 & 1 & 0 & 0 & 0 & 0 & 0 & 0 & 1 & 0 & 0 & 1 & 0 & 1 & 0 & 0 \\
0 & 0 & 1 & 0 & 0 & 0 & 0 & 0 & 1 & 0 & 0 & 1 & 0 & 0 & 1 & 0 \\
0 & 0 & 0 & 1 & 0 & 0 & 0 & 0 & 0 & 1 & 1 & 0 & 0 & 0 & 0 & 1 \\
0 & 0 & 0 & 0 & 1 & 0 & 0 & 0 & 1 & 0 & 0 & 0 & 0 & 0 & 0 & 0 \\
0 & 0 & 0 & 0 & 0 & 1 & 0 & 0 & 0 & 1 & 0 & 0 & 0 & 0 & 0 & 0 \\
0 & 0 & 0 & 0 & 0 & 0 & 1 & 0 & 0 & 0 & 1 & 0 & 0 & 0 & 0 & 0 \\
0 & 0 & 0 & 0 & 0 & 0 & 0 & 1 & 0 & 0 & 0 & 1 & 0 & 0 & 0 & 0 
\end{array}
\right).
$$

If $A_{[1]}$ is obtained from $A$ by setting all  elements in the first row equal to zero and $I_{16}$ is the $16 \times 16$ identity matrix then the  functions we need are the entries of the first column  of  $(I - z \tilde{A})^{-1}$.  They come in four sets corresponding to $L_1,L_2,L_3,L_4$, above, respectively:
$$\frac{(1-10z^2)z}{24z^4-16z^2+1},\;\frac{2(1-4z^2)z^2}{24z^4-16z^2+1},\;\frac{6z^3}{24z^4-16z^2+1},\;\frac{24z^2}{24z^4-16z^2+1}.
$$
If $\zeta(z)$ is one of these functions then using $k=4$ the distances are given by $\frac{1}{4} \zeta'(\frac{1}{4})$, and are respectively:
 $$15, \frac{56}{3}, \frac{61}{3}, \frac{64}{3}.$$

The following proof method can be made general but we again restrict it to the case $n=4$. By considering the sets $L_0, \dots, L_4$ as equivalence classes we can replace $A$ by  a  $ 5 \times 5$ (in general $(k+1) \times (k+1)$) matrix $B$ of the following form representing the transition between the $L_i$:
$$ L_0 \Leftrightarrow L_1 \Leftrightarrow L_2 \Leftrightarrow L_3 \Leftrightarrow L_4.$$
The matrix has an interesting form:
$$ B = \left[
\begin{array}{ccccc}
0 & 4 & 0& 0 & 0\\
1 & 0 & 3& 0 & 0\\
0 & 2 & 0& 2 & 0\\
0 & 0 & 3& 0 & 1\\
0 & 0 & 0& 4& 0\\
\end{array}
\right].
$$

If we carry out the same procedure as we used for $A$,  namely take $(I_5 - z \tilde{B})^{-1}$, where $\tilde{B}$ is $B$ with entries in the first row set to zero, we find  the same $\zeta(z)$ functions (now distinct) as obtained by using the full incidence matrix.

The general result uses the so-called ``group algebra'' which  quotients out by the equivalence relation, described above, replacing the set of generators in each $L_i$ by their sum. The group's action is then, essentially, on the whole equivalence class with a new (pseudo) adjacency matrix of the form $B$ above.  By careful study of the structure of such matrices, we are able to derive the formula for the distances from an element $L_t$, ($t=0,\ldots, k-1$) to $L_k$ to be:
\begin{equation}\label{e:abelian.formula}
d_{t,k} = \sum_{s=1}^{k-t} \left\{ {\binom{k-1}{s-1}}^{-1}\sum_{r=s}^k {\binom{k}{r}}\right\}.
\end{equation}
A simple check confirms that $d_{0,4}$, $d_{1,4}$, $d_{2,4}$, and $d_{3,4}$ give the distances $\frac{64}{3}, \frac{61}{3}, \frac{56}{3}, 15$ as calculated above.  

\section{Mason's rule and zeta functions }\label{a:mason.loops}
The original  Mason  rule~\cite{mason1953feedback} writes  the formula  for $\tilde{M}_{ij}(s)$ in terms of specifically defined paths and loops. For completeness we  present the essence of the original construction.

For a vertex $i$, a {\em bundle}, $B_i$ of $k(B_i)$ loops  is defined as a set of disjoint loops which do {\em not} pass through vertex $i$.
Define the {\em weight} of any collection, $C$ of edges
$$w(C)= \prod_{ (i,j) \in C} q(i,j),$$
where $q(i,j)$ is the transmittance of $(i,j)$. By Sylvester's rule for matrix inversion, the required moment generating function matrix has entries:
$$
\tilde{M}_{ij}(s)  = \left((I - Q_{[j]})^{-1}\right)_{ij} = \frac{\det\left((I-Q_{[j]})_{[ij]}\right)}{\det(I-Q_{[j]})} 
$$
where $\left((I - Q_{[j]})^{-1}\right)_{ij}$ indicates the $(i,j)$ entry of the matrix, and where the numerator of the right hand expression is  the cofactor of the $(j,i)$ element of $I - Q_{[j]}$ (this is the source of the term ``co-factor rule").

Properties of determinants give the denominator and numerator in the Mason rule as originally expressed:
\begin{eqnarray*}
	\det\left(I-Q_{[j]}(s)\right) & = & 1 +  \sum_{t}(-1)^t\sum_{B_j:\;k(B_j)=t}w(B_j) \\
	\det\left(\left(I-Q_{[j]}\right)_{[ij]}\right) & = &\sum_{R} w(R_{ij})\sum_{t}\left(1+ (-1)^t \sum_{\substack{B_j\;:\;k(B_j)=t,\\ B_j \cap R_{ij} = \emptyset,t\not\in B_j}}w(B_j)\right),
\end{eqnarray*}
where $R=\{R_{ij}\}$, and $R_{ij}$ is a direct (non self-intersecting) path from $i$ to $j$.

Recall that in  the case of a  Cayley graph $G$  we  start with a regular graph with incidence matrix $A$ in which
$p_{ij} = \frac{1}{k}$ and $m_{ij}(s) = m(s)$ and  use the generic notation $Q = zA$.

We feel that it is of independent interest that the quantity
$$\zeta_G(z) = \det(I-zA)^{-1}$$
is a type of zeta function. There are several different  zeta functions for graphs, and this special type arises in the theory of dynamical systems in which the edge $i \rightarrow j$ is referred to as a {\em  shift}. It is related to the  Bowen-Lanford theory~\cite{bowen1970zeta} as follows.
Under a suitable definition of a closed path and the condition that $A$ is aperiodic it can be shown that
$$\zeta_G(z) = \prod_{\tau} (1- z^{|\tau|}),$$
where $\tau$ is a simple circuit. 

However, in  our case the graphs defined by the matrices $A_{[i]}$ are absorbing and therefore are not captured by this formula. Even  for the full graph, $A$ can be periodic. For example in the case of $B_3$ (the hyperoctahedral group, or signed permutation group, on three letters):
$$\det(I-zA)^{-1} = (3z-1)(3z+1)(z-1)^3(z+1)^3(2z-1)^6(2z+1)^6,$$
showing eigenvalue multiplicities and hence periodicities.

Despite these multiplicity issues we can still give a description  of the quantities of interest in the style given in the  Mason rule. In the same way that zeta functions count circuits, it seems that they are a natural vehicle in this area to capture the loops and paths of the original theorem. Just as for several types of zeta function, determinants play a key role.

Thus, define
$$\zeta_G(z)^{-1} = 1 + \sum_j (-1)^j \sum_{B_i: \;k(B_i) = j} z^{k(B_i)}.$$
This suggests defining a vertex-specific zeta function for the Cayley graph $G$ by
$$\zeta_{G,i}(z)^{-1} = \det (I-z A_{[i]}) = 1+  \sum_j (-1)^j \sum_{B_i: \; k(B_i) = j; \; j \notin B_i} z^{k(B_i)}.$$
In a similar way, for an edge $(i,j)$ we have 
$$\zeta_{G,ij}(z)^{-1} =   \det ((I-z A_{[i]})_{[ij]}) = \sum_R z^{|R|} \sum _j \left(1+(-1)^j \sum_{B_i:\; k(B_i) = j, B_i \cap R = \emptyset; \; j \notin  B_i} z ^{|B_i|}\right),$$
where $R$ is defined as before.

\section{Discussion}
We collect here some questions, comments, and ideas for further investigation which arose in the gestation of this project, and of which some will be covered in our own future work.

\begin{enumerate}
\item It is clear that from a purely algebraic viewpoint the distance we propose houses information about the groups. This is very analogous to the way a zeta function holds information. One could say we have a special type of zeta function theory.
\item Since the Cayley graph depends not just on the group but the choice of generators for the group, so then does the distance. Thus the distances may be useful in separating out different biological processes by considering group generators corresponding to different sets of inversions, or other invertible operations such as translocations.
\item We should make clear that the distance is {\em linear} in the interarrival means $\mu_{ij}$. One could ask whether linearity is a useful property which may motivate further study.
\item That Cayley graphs are typically not aperiodic has been pointed out, and this is also clear from the circular structures in some examples. By adding additional generators they can be made aperiodic, and hence should make the steady state (ergodic) properties easier to study. 
\item We have made some simple assumptions about the interarrival moment generating function and the transitions, for our examples in Section~\ref{s:cayley}. But these can be made more general, for example by allocating different transition values to different types of biological event. An example of this may be a group-based model including both inversions and translocations, or one with different weights for inversions of different numbers of regions, as in~\cite{bhatia2019flexible}.
\item It is clear from examples, and the Abelian group example, that derivation of general formulae such as~\eqref{e:abelian.formula} may be achieved by a reduction using conjugacy, and a dummy Cayley graph with a pseudo-incidence structure such as in the matrix $B$ in Section~\ref{s:abelian}.
\item The mean first passage time (MFPT) distance provides a partial order on the elements of the group, much like the minimal distance, and others like the maximum likelihood estimate (MLE) distance (although this is only on a subset of the group).  The MLE distance has been shown to reverse the ordering on some group elements, relative to the minimal distance~\cite{serdoz2017maximum}, which has a concrete implication for phylogenetic reconstruction using algorithms like Neighbour-Joining~\cite{saitou1987neighbor}. It would be very interesting to understand whether any reversals of the minimal order under the MFPT are the same as those for the MLE.
\end{enumerate}

\bibliographystyle{plain}

\appendix
\section{Evolutionary models}
\label{s:evol.models}

Table~\ref{tab:models} shows a range of group-based models that this approach can be applied to.  Each corresponds to a particular group and generating set.

\begin{sidewaystable}

\caption{Models and corresponding group features.}\label{tab:models}

\begin{tabular}{|l|p{8cm}|p{13cm}|}
\hline
Name & Model & Group and Generators \\
\hline
\multicolumn{2}{|c|}{\emph{Unsigned inversions}} & The group is the symmetric group $S_n$, or Coxeter group of type $A_{n-1}$.\\
\hline
$U_2^{[L]}$ & 2-inversions on a lineal chromosome & Generators (allowable inversions) are the Coxeter generators $(1\ 2)$, $(2\ 3)$, \dots, $(n-1\ n)$.\\[2mm]
$U_2$ & 2-inversions on a circular chromosome. & Generators are the ``circular'' Coxeter generators  $(1\ 2)$, $(2\ 3)$, \dots, $(n-1\ n)$, $(n\ 1)$.  This was studied using the affine symmetric group in~\cite{egrinagy2013group}.\\[2mm]

$U_{3}$ & 2- and 3-inversions, circular chromosome. & Generators are $(i\ i+1)$ and $(i\ i+2)\mod n$.\\[2mm]
$U_{4}$ & 2-, 3- and 4-inversions, circular chromosome. & Generators are $(i\ i+1)$, $(i\ i+2)$, and $(i\ i+3)(i+1\ i+2)\mod n$.\\[2mm]
$U_k$ &  etc & \\[2mm]
$U_{full} $ & All inversions, circular chromosome.  & This is the dominant model in inversion distance literature.  \\[2mm]
\hline
\multicolumn{2}{|c|}{\emph{Signed inversions}} & The group is the hyperoctahedral group, or the Coxeter group of type $B_{n-1}$.\\
\hline
$S_{Cox}$ & & The Coxeter model with generators $t, s_1,\dots,s_{n-1}$, where $t$ swaps 1 and $-1$ and $s_i=(i\ i+1)$.  Not biologically sensible, but of interest algebraically.\\[2mm]
$S_2$ & Signed 1- and 2-inversions & Generators $(i\ -i)$, $(1\ -2)$ etc.  With signed inversions we assume $\pi(-i)=-\pi(i)$, so $(1\ -2)$ denotes $(1\ -2)(-1\ 2)$.\\[2mm]
$S_k$ & Signed inversions up to $k$ regions & \\[2mm]
\hline
\multicolumn{2}{|p{8cm}|}{\emph{Combined inversion and translocation model}} & For any of the above, include translocations that shift $i$ past $j$.  Write $t_{i,j}=(i\ i+1\ \dots\ j)$. \\
\hline

\hline
\end{tabular}

\end{sidewaystable}

\section{Cayley graphs for $S_4$}\label{s:cayley.figs}

The Cayley graphs of $S_4$ with standard and with circular generators are shown for reference in Figures~\ref{f:cayley.S4} and~\ref{f:cayley.S4.circ} respectively.

\begin{figure}[ht]
\includegraphics[width=12cm]{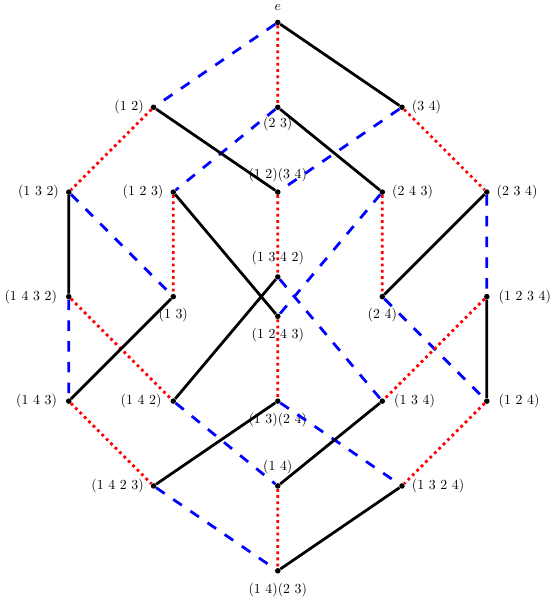} 
\caption{Cayley graph of $S_4$ with standard Coxeter generators. Edges are coded dashed blue for multiplication on the right by $(1\ 2)$, dotted red for $(2\ 3)$, and black for $(3\ 4)$ (group action is also on the right). }
\label{f:cayley.S4}
\end{figure}

\begin{figure}[ht]
\includegraphics[width=\textwidth]{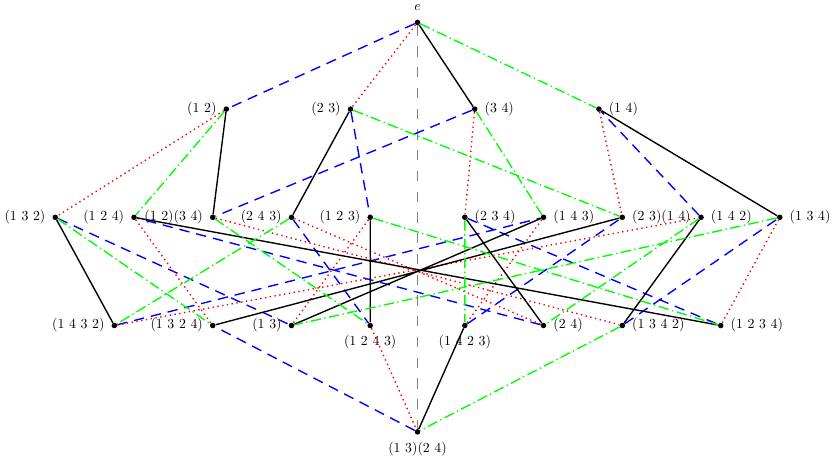}
\caption{Cayley graph of $S_4$ with circular generators. Edges are coded dashed blue for multiplication on the right by $(1\ 2)$, dotted red for $(2\ 3)$, black for $(3\ 4)$, and dash-dotted green for $(1\ 4)$ (group action is also on the right). 
}
\label{f:cayley.S4.circ}
\end{figure}

\end{document}